\newtheorem{theorem}{Theorem}
\newtheorem{corollary}[theorem]{Corollary}
\newtheorem{definition}[theorem]{Definition}
\def\numplayers{k}
\def\supsize{t}
\renewcommand{\Pr}[1]{\mbox{\rm\bf Pr}\left[#1\right]}
\title{Approximate Equilibria in Games with Few Players\thanks{This 
        work was supported by DFG grant VO 889/2, EPSRC Grant
        GR/T07343/02, and by the EU within the 6th Framework
        Programme under contract 001907 (DELIS).}} 
\author{Patrick Briest\\
       Dept.\ of Computer Science\\
       University of Liverpool, U.K.\\
       \href{mailto:Patrick.Briest@liverpool.ac.uk}{Patrick.Briest@liverpool.ac.uk}
\and
       Paul W.\ Goldberg\\
       Dept.\ of Computer Science\\
       University of Liverpool, U.K.\\
       \href{mailto:P.W.Goldberg@liverpool.ac.uk}{P.W.Goldberg@liverpool.ac.uk}
\and
       Heiko R\"oglin\\
       Dept.\ of Computer Science\\
       RWTH Aachen, Germany\\
       \href{mailto:roeglin@cs.rwth-aachen.de}{roeglin@cs.rwth-aachen.de}
}
\begin{document}

\maketitle

\begin{abstract}
We study the problem of computing approximate Nash equilibria
($\epsilon$-Nash equilibria) in normal form games, where the number of
players is a small constant. We consider the approach of looking for
solutions with constant support size. It is known from recent work
that in the 2-player case, a $\frac{1}{2}$-Nash equilibrium can be
easily found, but in general one cannot achieve a smaller value of
$\epsilon$ than $\frac{1}{2}$. In this paper we extend those results
to the $k$-player case, and find that $\epsilon = 1-\frac{1}{k}$ is
feasible, but cannot be improved upon. We show how stronger
results for the 2-player case may be used in order to slightly improve
upon the $\epsilon = 1-\frac{1}{k}$ obtained in the $k$-player case.
\end{abstract}

\section{Introduction}
\label{sec:intro}

A game in normal form has $\numplayers$ players, and for each player
$p$ a set $S^p$ of pure strategies. In this paper we assume that all
sets $S^p$ are of the same size $n$. The set $S$ of pure strategy
profiles is the cartesian product of the sets $S^p$. For each player
$p$ and each $s\in S$, the game has an associated value $u^p_s$, being
the utility or payoff to player $p$ if all players choose $s$.  Note
that the number of quantities needed to specify the game is
$\numplayers\cdot n^\numplayers$, so we must take $\numplayers$ to be
a constant, for the game's description to be of size polynomial
in $n$.

A mixed strategy for player $p$ is a probability distribution over
$S^p$. Suppose that each player $p$ chooses distribution $D^p$ over
$S^p$. Let us define a player's {\em regret} to be the highest payoff
he could obtain by choosing a best response to the other players'
mixed strategies, minus his actual expected payoff. Then, a {\em Nash
equilibrium} is a set of $D^p$'s for which all players' regrets are
zero. We say that the distributions $D^p$ form an {\em $\epsilon$-Nash
equilibrium} provided that all players' regrets are at most
$\epsilon$.

\subsection{Recent work}

Due to the apparent difficulty of computing Nash equilibria
exactly~\cite{CD2,CDT,DGP}, recent work has addressed the question of
polynomial-time computability of approximate Nash equilibria, in
particular the $\epsilon$-Nash equilibria defined above. The effort
has mostly addressed 2-player games, and the general question is, for
which values of $\epsilon$ can $\epsilon$-Nash equilibria be found in
polynomial time? The payoffs $u^p_s$ in the games are restricted to
lie in the range $[0,1]$, since otherwise the payoffs (and associated
values of $\epsilon$) could be rescaled arbitrarily. Recent papers
include~\cite{BBM07,TS07}, which show how to efficiently compute
2-player $\epsilon$-Nash equilibria for $\epsilon=0.36392$ and
$0.3393$ respectively. We do not know of similar work so far that
addresses the $k$-player case considered here.

The {\em support} of a probability distribution is the number of
elements of its domain that have non-zero probability.  Solutions of
games having small support (that is, players' mixed strategies do not
give positive probability to many of the pure strategies) are
attractive for two reasons.  From the perspective of modelling a
plausible outcome, we expect a participant in a game to prefer a
simple behaviour. Also, if constant-support solutions exist, then if
the number of players is constant, we can find them in polynomial time
by brute force.  For example, this approach is used by B\'ar\'any et
al.~\cite{BVV} to find Nash equilibria in random games. Our results
show that additional players makes it genuinely more
intractable to find a satisfactory solution; if we restrict the
support sizes to any constant then the worst-case regret of
some player increases.

It is known that for 2 players, if we restrict ourselves
to solutions with constant support, then there is a lower bound of
$\frac{1}{2}$ on the best $\epsilon$ that can be
achieved~\cite{FNS07}, and this lower bound is achieved by a very
simple algorithm of Daskalakis, Mehta and Papadimitriou~\cite{DMP06}.

Regarding the issue of how large the support size needs to be in order
to allow an $\epsilon$-Nash equilibrium, it is shown by
Alth\"ofer~\cite{A94} and Lipton et al.~\cite{LMM03} that
$\log(n)/\epsilon^2$ support size is sufficient, and~\cite{LMM03} point
out that the general method works for any constant number $k$ of players.
It follows that the brute-force ``support enumeration'' algorithm
takes time $O(n^{\log n})$ for any $\epsilon>0$.

Very recently, H\'{e}mon et al.~\cite{HRS08} studied independently from
our work the problem of computing approximate Nash equilibria in games
with more than two players. The results they obtain are very similar to
the results presented in this paper. In particular, they also show that
for games with $k$ players a $(1-\frac{1}{k})$-Nash equilibrium with
constant support size can be computed efficiently and that this is the
best possible which can be achieved with constant support strategies.
Additionally, they also show that support size $O(\log(n)/\epsilon^2)$
is sufficient for computing $\epsilon$-approximate Nash equilibria in the
additive and multiplicative sense.

\subsection{Our Results}
We give a simple algorithm, essentially an extension of~\cite{DMP06}
from the 2-player case to the $k$-player case, that finds
$(1-\frac{1}{k})$-Nash equilibria in which each player's mixed
strategy has support size at most 2. Notice that this result becomes
quite weak as $k$ increases, given that any set of mixed strategies
constitutes a 1-Nash equilibrium (recall that we assume all payoffs
lie in the range $[0,1]$). However, we also show that one can
do no better for constant support strategies. The argument is a kind
of generalisation of the one of~\cite{FNS07} that gives the result for
the 2-player case.

For unrestricted strategies, we give a method that uses any algorithm
for the 2-player case as a component, and provides slightly better
values of $\epsilon$ than the above. Finally,
we show how the bounded differences inequality can be used to give a
simplified proof of a result of~\cite{LMM03}, that for a constant
number of players, $O(\log n)$ support size is sufficient for finding
approximate equilibria.  This allows us to deduce that if the number
of players is less than $\sqrt{n}$, there is a subexponential
algorithm for finding approximate Nash equilibria.

\section{Details of Results}
\label{sec:results}

\subsection{Solutions with Constant Support}

In this section we generalise the results of~\cite{DMP06,FNS07} from
$\epsilon=\frac{1}{2}$ in the 2-player case, to $\epsilon=1-\frac{1}{k}$
in the $k$-player case.

\begin{definition}
Define a {\em winner-takes-all game} to be one in which, for any
combination of pure strategies, one player obtains a payoff of 1, and
the other players obtain payoffs of 0.
\end{definition}

In the 2-player case, winner-takes-all games are win-lose games where
the payoffs sum to 1. Our generalisation of the lower bound
of~\cite{FNS07} follows their approach in that we generate a random
winner-takes-all game and show that for support sizes limited by some
constant $\supsize$, for any $\epsilon < 1-\frac{1}{\numplayers}$, if
the game is large enough, there will be positive probability that any
solution with supports at most $\supsize$ is not $\epsilon$-Nash.

\begin{definition}
Define the {\em total support} of a mixed-strat\-egy profile to
be the sum over all players $p$, of the number of strategies of $p$
that $p$ assigns non-zero probability.
\end{definition}

\begin{theorem}
\label{lowerboundsupport}
Let $\supsize$ and $\numplayers$ be any positive integers, and let
$\epsilon=1-\frac{1}{\numplayers}$. There exist games having
$\numplayers$ players such that in any $\epsilon$-Nash equilibrium,
there must be at least one player whose mixed strategy has support
greater than $\supsize$. (In particular, there exist games with $n$
strategies for each player such that in any $\epsilon$-NE at least one
player must have a strategy with support $\Omega(\sqrt[k-1]{\log n})$.)
\end{theorem}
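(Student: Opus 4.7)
The plan is to extend the approach of~\cite{FNS07} to $\numplayers$ players via a probabilistic argument on a random winner-takes-all game. For each pure strategy profile $s\in S$, I would independently draw one of the $\numplayers$ players uniformly at random to be the winner and set $u^p_s=1$ for that player and $u^p_s=0$ for the others.

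The pivotal structural observation is that $\sum_p u^p_s = 1$ holds identically, so in any mixed profile $D$ the expected payoffs $a^p(D)$ satisfy $\sum_p a^p(D)=1$; in particular $\min_p a^p(D)\leq 1/\numplayers$. Fix any tuple $T=(T^1,\ldots,T^\numplayers)$ of candidate supports with $|T^p|\leq \supsize$ for each $p$, and call $T$ \emph{bad} if for every player $p$ there exists a pure strategy $\sigma\in S^p\setminus T^p$ that wins against every profile in $T^{-p}:=\prod_{q\neq p}T^q$, i.e.\ $u^p(\sigma,s^{-p})=1$ for all $s^{-p}\in T^{-p}$. For any $D$ supported on a bad $T$, the player $p^*$ achieving the minimum expected payoff has a deviation yielding payoff exactly $1$ (the winning $\sigma$ for $p^*$ beats every $s^{-p^*}\in\mathrm{supp}(D^{-p^*})\subseteq T^{-p^*}$), so $p^*$'s regret is at least $1-1/\numplayers=\epsilon$, which prevents $D$ from being an $\epsilon$-Nash equilibrium.

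The main probabilistic computation is then straightforward. For fixed $T$, fixed $p$, and any candidate deviation $\sigma\in S^p\setminus T^p$, the event ``$\sigma$ wins against every $s^{-p}\in T^{-p}$'' has probability $(1/\numplayers)^{|T^{-p}|}\geq \numplayers^{-\supsize^{\numplayers-1}}$, and the events for distinct $\sigma$ are jointly independent because they depend on disjoint coordinates of the payoff table. Hence the probability that no winning $\sigma$ exists for $p$ is at most $\exp\bigl(-(n-\supsize)\,\numplayers^{-\supsize^{\numplayers-1}}\bigr)$, and a union bound over the $\numplayers$ players and the at most $n^{\numplayers\supsize}$ choices of $T$ yields total failure probability at most $\numplayers\cdot n^{\numplayers\supsize}\cdot \exp\bigl(-(n-\supsize)\,\numplayers^{-\supsize^{\numplayers-1}}\bigr)$. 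For fixed $\supsize$ and $\numplayers$ this tends to $0$ as $n\to\infty$, yielding the first assertion; the resulting constraint $(n-\supsize)\,\numplayers^{-\supsize^{\numplayers-1}}\gg \numplayers\supsize\log n$ rearranges to $\supsize=\Omega\bigl(\sqrt[\numplayers-1]{\log n}\bigr)$, matching the parenthetical.

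The step I expect to require the most care is the boundary case $a^{p^*}(D)=1/\numplayers$, where the regret equals $\epsilon$ rather than strictly exceeding it and $D$ technically still qualifies as an $\epsilon$-equilibrium. I would handle this by a small modification of the random construction (for instance, independently flipping each payoff to $0$ with a tiny probability) that forces $\sum_p a^p(D)<1$ strictly on any non-pure support; the pure-profile case is handled separately since any losing player can deviate to payoff $1$, giving regret $1>\epsilon$ with probability close to $1$.
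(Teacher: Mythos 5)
Your main argument is essentially the paper's own proof: the same random winner-takes-all construction, the same key observation that the payoffs sum to $1$ so some player's expected payoff is at most $1/k$, the same per-player probability bound of roughly $(1-k^{-t^{k-1}})^{n}$ for the non-existence of a winning deviation, and the same union bound over the $\approx n^{kt}$ candidate supports and $k$ players, giving the threshold $t=\Theta(\sqrt[k-1]{\log n})$. The cosmetic differences (restricting $\sigma$ to lie outside $T^p$, writing the failure probability as an exponential, parametrising by per-player rather than total support) change nothing.

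The one place you go beyond the paper is the boundary case, and you are right to flag it: the argument only yields regret $\geq 1-\frac{1}{k}$ for the minimum-payoff player, and if every player's expected payoff equals $1/k$ exactly (which the adversary can arrange, e.g.\ a $2\times 2$ sub-game in which the two players win on complementary diagonals, played uniformly), the profile still satisfies the ``regret at most $\epsilon$'' definition. The paper's proof silently has the same weakness; its informal discussion preceding the theorem quantifies over $\epsilon<1-\frac{1}{k}$, which is the statement the argument actually delivers. However, your proposed repair does not work. If each payoff of $1$ is independently flipped to $0$ with a tiny probability $\delta$, then a fixed support tuple $T$ spans at most $t^k$ pure profiles, so with probability at least $(1-\delta)^{t^k}$ --- a constant bounded away from $0$ --- none of them is flipped and $\sum_p a^p(D)=1$ persists on that support; this cannot survive a union bound over the $n^{kt}$ support tuples, and among the unaffected tuples the balanced winner patterns still occur with high probability. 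The clean fix is simply to prove the claim for every $\epsilon<1-\frac{1}{k}$ (the minimum-payoff player then has regret $\geq 1-\frac{1}{k}>\epsilon$), which is what the lower bound is morally about and what the corollary actually uses.
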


\begin{proof}
We construct a $\numplayers$-player winner-takes-all game ${\cal G}$
uniformly at random as follows. Let each player have $n$ pure
strategies (a suitable value of $n$ will be identified later). For
each combination of pure strategies, choose one of the $\numplayers$
players uniformly at random and let that player have an associated
payoff of 1, while the remaining players have payoffs of 0.

We prove that for large enough $n$, with positive probability,
the resulting game has the desired property.

Consider strategy profiles with total support $\supsize$.  There are
less than $\binom{\numplayers n}{\supsize}$ possible support sets.

We choose $n$ large enough such that for any individual support set of
total size $\supsize$, the probability (with respect to random
generation of ${\cal G}$) that all players have regret less than
$\epsilon$ is less than $1/\binom{\numplayers n}{\supsize}$, so that
we can apply a union bound.

Given any mixed-strategy profile, we know that there exists at least
one player who has an expected payoff of at most $1/\numplayers$,
since the payoffs always sum to 1. Fix a support set $S$ of total size
$\supsize$.  We show that all players (including in particular the one
with lowest expected payoff) have (with positive probability) a pure
strategy that, if they unilaterally defected to it, would give them
payoff 1 (for any mixed profile using $S$).

For each player $p$, there are $n$ pure strategies to choose from.  In
order for some given strategy $s^p$ (available to player $p$) to have
expected payoff of 1 with respect to the other players' strategies, a
sufficient condition is that the payoff entries corresponding to $s^p$
and the $\leq \supsize-1$ strategies in use by the other players,
should let player $p$ win. There are fewer than
$\supsize^{\numplayers-1}$ pure-strategy combinations that the other
players can select with non-zero probability. 

The probability that $s^p$ wins against a fixed pure-strategy 
combination is $1/\numplayers$, so the probability that $s^p$ wins 
against all pure-strategy combinations of the other players is 
$\numplayers^{-(\supsize^{\numplayers-1})}$. Hence, the probability 
that player $p$ has no strategy that guarantees him a payoff of $1$ is
\[
   \left(1-\numplayers^{-(\supsize^{\numplayers-1})}\right)^{n}\enspace.
\]

The probability that there exists a player $p$ without a winning
strategy is at most
\[
  \numplayers\cdot\left(1-\numplayers^{-(\supsize^{\numplayers-1})}\right)^{n}\enspace.
\]

The probability that there exists a support set for which there exists a
player $p$ without a winning strategy is at most
\[
  {\numplayers n \choose \supsize}\cdot
  \numplayers\cdot\left(1-\numplayers^{-(\supsize^{\numplayers-1})}\right)^{n}\enspace. 
\]
We can upper bound this probability by
\[
  \numplayers\cdot(\numplayers 
  n)^\supsize\cdot\left(1-\numplayers^{-(\supsize^{\numplayers-1})}\right)^{n}\enspace.
\]

Let us set $t=\sqrt[k-1]{a\log_k(n)}$ for some constant $a>0$. Then the 
term simplifies to
\begin{equation}\label{eqn:1}
   k\cdot(kn)^{\sqrt[k-1]{a\log_k(n)}}\cdot\left(1-n^{-a}\right)^{n}\enspace.
\end{equation}
For $a=1/2$, we can estimate $\left(1-n^{-a}\right)^{n}$ by 
$e^{-\sqrt{n}}$ and hence \eqref{eqn:1} tends to zero when $n$ tends to 
infinity.

This ensures that if the size of the support set is bounded from above 
by $\sqrt[k-1]{\log_k(n)/2}$, then, for large enough $n$, with positive 
probability each player has a pure strategy with payoff 1, regardless 
of the choice of the support set.

Hence, we require a total support size of 
$\Omega(\sqrt[k-1]{\log n})$ to ensure an approximation performance of 
better than $1-1/\numplayers$, which for two players agrees with the 
upper bound of~\cite{LMM03}.
\end{proof}

\begin{corollary}
If we restrict strategies to have constant support, there is
a lower bound of $1-\frac{1}{k}$ on the approximation quality
that we can guarantee.
\end{corollary}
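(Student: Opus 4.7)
The plan is to derive the corollary as a direct quantifier-flip of Theorem~\ref{lowerboundsupport}. The corollary concerns algorithms (or, really, strategy profiles) that are a priori restricted to have every player's support bounded by some fixed constant $\supsize$, independent of the game size $n$. The theorem, in contrapositive form, already states exactly what we need: for any such constant $\supsize$ and any $\numplayers$, there exist $\numplayers$-player games (arising from the random winner-takes-all construction with $n$ large enough) in which every profile whose players all have support at most $\supsize$ is \emph{not} a $(1-\tfrac{1}{\numplayers})$-Nash equilibrium, i.e.\ some player has regret strictly greater than $1-\tfrac{1}{\numplayers}$.

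Concretely, I would proceed as follows. First, fix an arbitrary constant support bound $\supsize$ and an arbitrary target approximation $\epsilon' < 1-\tfrac{1}{\numplayers}$. Second, apply Theorem~\ref{lowerboundsupport} with parameters $\supsize$ and $\numplayers$: it hands back a game $\mathcal{G}$ (with sufficiently large $n$) such that no mixed profile with every support $\leq \supsize$ is a $(1-\tfrac{1}{\numplayers})$-NE in $\mathcal{G}$. Third, observe that any profile that fails to be a $(1-\tfrac{1}{\numplayers})$-NE a fortiori fails to be an $\epsilon'$-NE, since $\epsilon' < 1-\tfrac{1}{\numplayers}$. Hence on $\mathcal{G}$ no constant-support profile achieves approximation quality $\epsilon'$, so no algorithm restricted to constant-support outputs can guarantee better than $1-\tfrac{1}{\numplayers}$.

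There is essentially no hard step here: all of the probabilistic and combinatorial work lives inside the proof of Theorem~\ref{lowerboundsupport}. The only thing that requires care is unwinding the quantifiers---making sure that ``constant support'' is interpreted as ``bounded by a fixed $\supsize$ uniformly over the game instance'' so that the theorem can be applied with that very $\supsize$, and checking that the strict inequality in the theorem's conclusion (``support \emph{greater} than $\supsize$'') translates into the strict failure of the $\epsilon$-NE property for every constant-support profile, which is what powers the lower bound.
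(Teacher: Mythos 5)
Your proposal is correct and matches the paper, which states this corollary without further proof as an immediate consequence of Theorem~\ref{lowerboundsupport}; the quantifier-unwinding you describe (fix the constant $\supsize$, invoke the theorem, note that failing to be a $(1-\frac{1}{k})$-NE implies failing to be an $\epsilon'$-NE for any $\epsilon'<1-\frac{1}{k}$) is exactly the intended derivation.
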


Note that the above is a generalisation of a lower bound
of~\cite{A94,FNS07} from the 2-player case to the $k$-player case.
We show that this lower bound is optimal by giving an algorithm
(which is an extension of an algorithm of~\cite{DMP06})
that finds a $1-\frac{1}{k}$-approximate Nash equilibrium
where each player has support size at most 2.

\begin{theorem}
Let ${\cal G}$ be a $k$-player game with $n$ strategies per
player. Let $\epsilon=1-\frac{1}{k}$.  There is a mixed strategy
profile in which each player has support at most 2 which constitutes
an $\epsilon$-Nash equilibrium.
\end{theorem}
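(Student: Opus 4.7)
My plan is to prove the theorem by induction on $k$, recursively extending the two-player construction of DMP06. The base case $k=2$ is exactly the DMP06 result recalled earlier (a $\tfrac{1}{2}$-Nash equilibrium with each player's support of size at most $2$), which matches $1-\tfrac{1}{k}$ for $k=2$.

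For the inductive step on a $k$-player game ${\cal G}$, I would pick any pure $s_1\in S^1$ and consider the $(k-1)$-player game ${\cal G}_{s_1}$ obtained by fixing player $1$'s strategy to $s_1$ (so player $i$'s payoff at $(x_2,\ldots,x_k)$ is $u^i(s_1,x_2,\ldots,x_k)$). By the inductive hypothesis, ${\cal G}_{s_1}$ admits a profile $(\sigma_2,\ldots,\sigma_k)$ with each support of size at most $2$ that is a $\bigl(1-\tfrac{1}{k-1}\bigr)$-Nash equilibrium. I would then let $t_1\in S^1$ be a pure best response of player $1$ to $(\sigma_2,\ldots,\sigma_k)$ in ${\cal G}$ and output the profile in which player $1$ plays $s_1$ with probability $\alpha:=\tfrac{k-1}{k}$ and $t_1$ with probability $\tfrac{1}{k}$, while each player $i\geq 2$ plays $\sigma_i$. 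All supports have size at most $2$ by construction.

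For the regret analysis, player $1$'s regret equals $\alpha\bigl(E_\sigma[u^1(t_1,X_{-1})]-E_\sigma[u^1(s_1,X_{-1})]\bigr)\leq\alpha=1-\tfrac{1}{k}$, since the bracketed difference lies in $[0,1]$. For player $i\geq 2$, let $M_i$ denote $i$'s best-response payoff against $\sigma_{-i}$ inside ${\cal G}_{s_1}$. In ${\cal G}$, player $i$'s best-response payoff is bounded above by $\alpha M_i+(1-\alpha)$, trivially bounding by $1$ the summand where player $1$ plays $t_1$. By the inductive hypothesis, $i$'s payoff in ${\cal G}_{s_1}$ from $(\sigma_2,\ldots,\sigma_k)$ is at least $M_i-(1-\tfrac{1}{k-1})$, so dropping the non-negative $t_1$-summand, $i$'s current payoff in ${\cal G}$ is at least $\alpha\bigl(M_i-(1-\tfrac{1}{k-1})\bigr)$. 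Subtracting yields regret $\leq(1-\alpha)+\alpha\bigl(1-\tfrac{1}{k-1}\bigr)=1-\tfrac{\alpha}{k-1}=1-\tfrac{1}{k}$.

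The main technical point is this regret bound for $i\geq 2$, and in particular the observation that $\alpha=\tfrac{k-1}{k}$ is forced from both directions: $\alpha\leq\tfrac{k-1}{k}$ is needed to control player $1$'s regret while $\alpha\geq\tfrac{k-1}{k}$ is needed for the others', and these two constraints balance exactly at $1-\tfrac{1}{k}$, matching the lower bound from Theorem~\ref{lowerboundsupport}. Everything else amounts to routine bookkeeping once the induction is set up correctly.
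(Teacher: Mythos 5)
Your proof is correct, and if you unroll the induction you obtain exactly the paper's profile: player $i$ puts probability $1-\frac{1}{k+1-i}$ on an arbitrary pure strategy $s_i$ and the rest on a best response to the mixture formed by $s_1,\ldots,s_{i-1}$ and the later players' final strategies. The only difference is presentational --- the paper verifies the regret bound directly via the product $\prod_{j<i}\bigl(1-\frac{1}{k+1-j}\bigr)$, whereas you package the same computation as an induction (which is precisely the recursion the paper itself uses later for the non-constant-support result, with $\delta_{k-1}=1-\frac{1}{k-1}$).
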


\begin{proof}
The following algorithm achieves the stated objective.
\begin{enumerate}
\item
Number the players $1,\ldots,k$. For $i=1,\ldots,k-1$ in
increasing order, player $i$ allocates a probability
of $1-\frac{1}{k+1-i}$ to an arbitrary pure strategy $s_i$.
\item
Player $k$ allocates probability 1 to a pure best response $b_k$ to
the strategy combination $(s_1,\ldots,s_{k-1})$ of players
$1,\ldots,k-1$.
\item
For $i=k-1,\ldots,1$ in decreasing order, player $i$ allocates his
remaining probability $\frac{1}{k+1-i}$ to a pure best response $b_i$
to the mixed strategy formed so far, i.e., to the strategy
$(s_1,\ldots,s_{i-1},r_{i+1},\ldots,r_k)$, where $r_j$ denotes the
mixed strategy $(1-\frac{1}{k+1-j})\cdot s_j+(\frac{1}{k+1-j})\cdot
b_j$.
\end{enumerate}

In order to prove the theorem, we compute the regret of a player 
$i\in\{1,\ldots,k\}$. To avoid case distinctions, we denote by $s_k$ an
arbitrary pure strategy of player $k$. If player $i$ plays strategy
$s_i$, which happens with probability $1-\frac{1}{k+1-i}$, his regret can
only be bounded by $1$. If player $i$ plays strategy $b_i$, then his
regret is $0$ if all players $j<i$ play strategy $s_j$ and his regret
can be as bad as $1$ otherwise. Altogether, this implies that the regret
of player $i$ is bounded by
\begin{align*}
  & \Pr{\mbox{$i$ plays $s_i$}}+\Pr{\mbox{$i$ plays $b_i$}}\cdot\Pr{\exists j<i: \mbox{$j$ plays $b_j$}}\\
  =& \left(1-\frac{1}{k+1-i}\right) + \frac{1}{k+1-i}\left(1-\prod_{j=1}^{i-1}\left(1-\frac{1}{k+1-j}\right)\right)\\
  =& 1 - \frac{1}{k}\enspace. \qedhere
\end{align*}
\end{proof}

\subsection{Solutions with non-constant support}

The general issue of interest is the question of what approximation
guarantee can we obtain for a polynomial-time algorithm for $k$-player
approximate Nash equilibrium. To what extent can we improve on the
above (weak) result when we allow unrestricted mixed strategies? We do
not have a very substantial improvement, but the following shows how
the trick of~\cite{DMP06} can be combined with better 2-player
algorithms to get an improvement over the $(1-\frac{1}{k})$-result
that the above would yield for $k$ players.

\begin{theorem}
If 2-player $\epsilon$-approximate Nash equilibria can be found in 
polynomial time, then $k$-player 
$\frac{(k-2)-(k-3)\epsilon}{(k-1)-(k-2)\epsilon}$-Nash equilibria can 
be found in polynomial time.
\end{theorem}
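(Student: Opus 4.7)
The plan is to reuse the algorithm from the preceding theorem almost verbatim, but to replace the pure best-response step performed by the last player with a call to the assumed 2-player $\epsilon$-Nash algorithm applied to a suitable 2-player subgame. Concretely, order the players $1,\ldots,k$, let $\alpha_1,\ldots,\alpha_{k-2}\in(0,1)$ be parameters to be fixed later, and have each player $i\le k-2$ assign probability $1-\alpha_i$ to an arbitrary pure strategy $s_i$. Freeze these $s_i$ in place, form the 2-player subgame between players $k-1$ and $k$ obtained by fixing players $1,\ldots,k-2$ at $s_1,\ldots,s_{k-2}$, and invoke the 2-player algorithm to obtain distributions $D_{k-1}$ and $D_k$ forming an $\epsilon$-Nash equilibrium of this subgame. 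Finally, for $i=k-2,\ldots,1$ in decreasing order, let player $i$ place the remaining probability $\alpha_i$ on a pure best response $b_i$ to the mixed profile constructed so far.

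For the regret analysis, set $p_i=\prod_{j<i}(1-\alpha_j)$. Exactly as in the previous theorem, for $i\le k-2$ the regret of player $i$ is $0$ whenever $i$ plays $b_i$ and every earlier player plays their pure $s_j$, since $b_i$ was chosen as a best response to precisely that profile; all other events contribute at most $1$, yielding a bound of $1-\alpha_i p_i$. For players $k-1$ and $k$, whenever all players $j\le k-2$ play $s_j$ (probability $p_{k-1}$) the two of them are playing an $\epsilon$-Nash equilibrium of the induced subgame and have regret at most $\epsilon$; otherwise the regret is trivially at most $1$. This gives a bound of $1-(1-\epsilon)p_{k-1}$ for each of them.

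To finish, I would equalise all $k$ bounds by imposing $\alpha_i p_i=(1-\epsilon)p_{k-1}$ for every $i\le k-2$. Writing $c=(1-\epsilon)p_{k-1}$, the recursion $p_{i+1}=p_i-\alpha_i p_i=p_i-c$ forces the $p_i$ to decrease in equal steps, so $p_{k-1}=1-(k-2)c$; substituting back into the definition of $c$ produces a single linear equation whose solution yields a common regret of
\[
  1-c \;=\; \frac{(k-2)-(k-3)\epsilon}{(k-1)-(k-2)\epsilon},
\]
as claimed. I do not foresee a real obstacle: every step other than the 2-player call is elementary and clearly polynomial-time, and the bound is consistent with the preceding theorem at $\epsilon=\tfrac{1}{2}$ (where it recovers $1-\tfrac{1}{k}$) while being strictly better whenever $\epsilon<\tfrac{1}{2}$. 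The only mildly delicate point is the book-keeping showing that each of the $k$ regret bounds has the clean form \emph{one minus a product of probabilities}, which is what makes the balancing step reduce to a single scalar equation in $c$.
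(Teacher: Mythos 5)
Your proposal is correct and is essentially the paper's argument: the paper presents the same staircase construction as an induction on $k$ (player $1$ commits probability $1/(2-\delta_{k-1})$ to an arbitrary pure strategy, players $2,\ldots,k$ recursively solve the residual game via $\mathcal{A}_{k-1}$ with the $2$-player oracle at the bottom, and player $1$ then best-responds with his remaining probability), and unrolling that recursion yields exactly your algorithm with your values of $\alpha_i$. The paper's recurrence $\delta_k = 1/(2-\delta_{k-1})$ with $\delta_2=\epsilon$ is just the inductive form of your single linear equation in $c$, so the two analyses produce the same mixed profile and the same bound.
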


As a corollary, the approximation guarantee of $0.3393$
associated with the algorithm of~\cite{TS07} gives a 3-player
algorithm with an approximation guarantee of $0.6022$.

\begin{proof}
We prove the theorem by induction on the number of players. Let 
$\delta_k=\frac{(k-2)-(k-3)\epsilon}{(k-1)-(k-2)\epsilon}$. For $k=2$, 
which is the induction basis, $\delta_2=\epsilon$. Now assume that 
there is an algorithm ${\cal A}_{k-1}$ with approximation guarantee 
$\delta_{k-1}$ for $(k-1)$-player games. The following algorithm for
$k$-player games achieves the stated approximation guarantee $\delta_k$.

\begin{enumerate}
\item Player 1 allocates a probability of $1/(2-\delta_{k-1})$ to some
arbitrary pure strategy $s$.
\item Players $2,\ldots,k$ apply ${\cal A}_{k-1}$ to the $(k-1)$-player game
that results from letting player 1 play $s$.
\item Player 1 allocates his remaining probability to a pure best
response $b$ to the strategies of players $2,\ldots,k$ chosen in step 2.
\end{enumerate}

A player $i\in\{2,\ldots k\}$ has regret at most $\delta_{k-1}$ if player 1
plays strategy $s$ and his regret can be as bad as 1 otherwise. Hence,
the regret of a player $i\in\{2,\ldots k\}$ can be bounded by
\[
  \left(\frac{1}{2-\delta_{k-1}}\right)\cdot\delta_{k-1}+\left(1-\frac{1}{2-\delta_{k-1}}\right)
  = \frac{1}{2-\delta_{k-1}}\enspace.
\]
Player 1 has no regret when playing $b$ and his regret can be as bad as
1 when playing $s$. This implies that also the regret of player 1 can be
bounded by $1/(2-\delta_{k-1})$. A simple calculation shows
\[
   \frac{1}{2-\delta_{k-1}} =
   \frac{1}{2-\frac{(k-3)-(k-4)\epsilon}{(k-2)-(k-3)\epsilon}} 
   = \frac{(k-2)-(k-3)\epsilon}{(k-1)-(k-2)\epsilon}
   = \delta_k\enspace,
\]
as desired.
\end{proof}

Finally, we use the bounded differences inequality to give a
simplified proof of a result of~\cite{LMM03}.

\begin{theorem}
Let ${\cal G}$ be a $k$-player game in which each player has
$n$ pure strategies. Let $\epsilon>0$.
Then there exists a mixed strategy profile
where each player has support $k^2\log(kn)/2\epsilon^2$ that
constitutes an $\epsilon$-Nash equilibrium.
In particular, the players' distributions are empirical
distributions over multisets of size $k^2\log(kn)/2\epsilon^2$.
\end{theorem}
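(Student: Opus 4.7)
The plan is to sample empirical distributions from an exact Nash equilibrium and show, via the bounded differences (McDiarmid) inequality, that the resulting sampled profile is an $\epsilon$-Nash equilibrium with positive probability. Concretely, I would fix any exact Nash equilibrium $(D^1,\ldots,D^k)$ of ${\cal G}$ (which exists by Nash's theorem), let $V^p$ denote player $p$'s equilibrium expected payoff, and for each $p$ independently draw $m=k^2\log(kn)/(2\epsilon^2)$ i.i.d.\ samples $X^p_1,\ldots,X^p_m$ from $D^p$; writing $\hat{D}^p$ for the resulting empirical distribution, each $\hat{D}^p$ is by construction supported on a multiset of size $m$, so it suffices to show that $\hat{D}=(\hat{D}^1,\ldots,\hat{D}^k)$ is an $\epsilon$-Nash equilibrium with positive probability over the $km$ independent samples.

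I would then introduce two families of quantities, both regarded as functions of the $km$ samples: for each pair $(p,s)$ with $s \in S^p$, player $p$'s payoff $U^p(s)$ from deviating to $s$ while facing $\hat{D}^{-p}$; and for each $p$, the realised expected payoff $V^p_{\mathrm{samp}}$ under $\hat{D}$. Writing these out as
\begin{equation*}
U^p(s) = \frac{1}{m^{k-1}} \sum_{\mathbf{i} \in [m]^{k-1}} u^p\bigl(s,(X^q_{i_q})_{q\ne p}\bigr),
\qquad
V^p_{\mathrm{samp}} = \frac{1}{m^k} \sum_{\mathbf{i} \in [m]^k} u^p\bigl((X^q_{i_q})_{q}\bigr),
\end{equation*}
makes two observations transparent. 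First, mutual independence of the samples gives $\Ex{U^p(s)}$ equal to player $p$'s payoff when deviating to $s$ against the original $D^{-p}$, which is $\le V^p$ by the Nash property, and $\Ex{V^p_{\mathrm{samp}}} = V^p$. Second, altering any single sample $X^q_j$ affects only the $1/m$ fraction of summands with $i_q = j$, each of which is a payoff in $[0,1]$, so both quantities have bounded differences $c=1/m$ in each of their at most $km$ arguments.

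Applying the one-sided McDiarmid inequality then yields the bounds $\Pr{U^p(s) \ge V^p + \epsilon/2} \le \exp\bigl(-\epsilon^2 m/(2(k-1))\bigr)$ for each pair $(p,s)$, and $\Pr{V^p_{\mathrm{samp}} \le V^p - \epsilon/2} \le \exp\bigl(-\epsilon^2 m/(2k)\bigr)$ for each $p$. With the chosen $m$, a union bound over the $kn$ deviation events and the $k$ realised-payoff events leaves total failure probability strictly less than $1$, so on the complementary event every player $p$ simultaneously has $\max_s U^p(s) \le V^p + \epsilon/2$ and $V^p_{\mathrm{samp}} \ge V^p - \epsilon/2$, and hence regret at most $\epsilon$, as required.

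The main obstacle, and the source of the simplification over the original Hoeffding-based argument in~\cite{LMM03}, is correctly identifying the Lipschitz constant $c = 1/m$ when the empirical distributions appear in all $k$ coordinates simultaneously and the relevant functionals are high-degree multilinear expressions in the empirical weights. Once each quantity is written as an average over $m^{k-1}$ (resp.\ $m^k$) payoff entries and one observes that any single sample touches only a $1/m$ fraction of them, the concentration bound is immediate and the multi-coordinate dependence across the $k$ players causes no additional loss beyond the factor of $k-1$ (resp.\ $k$) in the exponent.
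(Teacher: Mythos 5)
Your proposal is correct and takes essentially the same route as the paper: sample a multiset from an exact Nash equilibrium, show via the bounded differences inequality that each of the $kn$ deviation payoffs (and each realised payoff) concentrates around its expectation, and finish with a union bound. The only differences are cosmetic: you apply McDiarmid to the $km$ individual samples rather than to $m$ whole pure profiles, which actually sharpens the exponent from $\epsilon^2 m/(2k^2)$ to $\epsilon^2 m/(2(k-1))$, and your union bound carries the same harmless constant-factor slack as the paper's own proof (which derives $\log(2kn)$ where the statement says $\log(kn)$) --- with $m=k^2\log(kn)/2\epsilon^2$ and $k=2$ your $kn$ deviation bounds already sum to exactly $1$, so strictly speaking one needs $\log(2kn+2k)$ or similar.
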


\begin{proof}
The bounded differences inequality (see e.g.~\cite{DL01} p.~8) is the
following.  Let $A$ be a set.
Let $g:A^N \longrightarrow {\bf R}$ be a function with the
{\em bounded difference} property: for all $i\in\{1,\ldots,n\}$ and for
all $x_1,\ldots,x_N,x'_i\in A$
\[
| g(x_1,\ldots,x_N)-g(x_1,\ldots,x_{i-1},x'_i,x_{i+1},\ldots,x_N) |
\leq c_i.
\]
Suppose $X_1,\ldots,X_N$ are independent random variables over $A$.
The bounded difference inequality states that for all $t>0$,
\begin{align*}
& \quad\, \Pr{g(X_1,\ldots,X_N)-{\bf E} g(X_1,\ldots,X_N) \geq t}\\
& \leq \exp\left(\frac{-2t^2}{\sum_{i=1}^N c^2_i}\right),
\end{align*}
and
\begin{align*}
& \quad\, \Pr{{\bf E} g(X_1,\ldots,X_N) - g(X_1,\ldots,X_N) \geq t}\\
& \leq \exp\left(\frac{-2t^2}{\sum_{i=1}^N c^2_i}\right).
\end{align*}
We apply the above as follows. Let $A=[n]^k$, the set of all pure
profiles of $k$-person games where each player has $n$ strategies.
The $X_i$ are samples from some fixed Nash equilibrium.
Let $g_{i,j}$ be the payoff to player $i$ for using strategy $j$
subject to all players using the mixture of $N$ strategies
obtained by taking, for each player $p$, the $p$-th entries of each
$X_m$ ($m\in [N]$), and using the uniform distribution on that multiset.

Note that for the $g_{i,j}$ functions, $c_i \leq k/N$.

We want the right-hand sides to be less than $1/2kn$, so that by a
union bound, for each $i$ and $j$, we have the payoff for player $i$
using strategy $j$ being close to expected.  Thus, we want $N$ such
that
\[
\exp(-2\epsilon^2/(N(k^2/N^2))) \leq \frac{1}{2kn}.
\]
Solving for $N$ we get
\[
N \geq \frac{k^2 \log (2kn)}{2\epsilon^2}.\qedhere
\]
\end{proof}

The theorem indicates that support enumeration is a subexponential
algorithm provided that $k=o(n^{1/2})$ (a brute-force search
has complexity $O(n^{kN})$ which is subexponential for
$k=O(n^{1/2-\zeta})$.) Let us remark, that for $k=o(n^{1/3})$
support enumeration is not only subexponential in the input size $n^k$ but also
in the number $n$ of different strategies of the players. Hence, in
this case, the running time is subexponential even for more succinctly
represented games.

\section{Conclusions}

It seems that only very weak approximation performance is possible
with constant-support strategies. Work on the special case
of 2-player games has mostly addressed this problem by constructing
linear programs whose solutions have useful properties (and are
typically non-constant support strategies). For more than 2 players
this approach no longer works; the corresponding constraints are
no longer linear. The challenge seems to be to find alternative
ways of describing mixed strategies (of more than constant
support) that have desirable properties.

\end{document}